\documentclass[11pt]{article}

\usepackage{graphicx} 
\usepackage{amsthm, amsmath, amsfonts,mathtools, fullpage, graphicx, wrapfig, tikz, caption, subcaption}
\usepackage{natbib}
\usepackage{amssymb}
\usepackage{verbatim}
\usepackage{relsize}
\usepackage{listings}
\usepackage{xcolor}
\usepackage{hyperref}
\usepackage{graphicx}
\usepackage{multirow}
\usepackage[linesnumbered,ruled,vlined]{algorithm2e}
\newcommand\independent{\protect\mathpalette{\protect\independenT}{\perp}}
\def\independenT#1#2{\mathrel{\rlap{$#1#2$}\mkern2mu{#1#2}}}
\newtheorem{theorem}{Theorem}

\newtheorem{remark}{Remark}

\newtheorem{assumption}{Assumption}
\usepackage{booktabs}
\usepackage{tabularx}

\title{Trust Me, I'm a Doctor? }

\author{
    Zach Shahn\\
    Department of Epidemiology and Biostatistics\\
    CUNY Graduate School of Public Health and Health Policy
    \and
    Mats Stensrud\\
   Institute of Mathematics\\
    École Polytechnique Fédérale de Lausanne (EPFL)
}

\date{}

\begin{document}

\maketitle

\begin{abstract}
Clinical trials usually target average treatment effects, but treatment decisions are made for individuals. This tension motivates a common criticism of evidence-based medicine: a treatment that is beneficial on average may be inappropriate for a particular patient, and skilled physicians may outperform rigid adherence to the strategy that performed best in a randomized trial. We consider how randomized and observational data from the same target population can be used to assess that possibility. Specifically, we study settings in which trial and/or observational data yield estimates of average outcomes under `treat all', `treat none', and usual care strategies in the population of interest. We derive sharp bounds on the proportion of encounters with physicians whose personal strategies outperform the better of `treat all' or `treat none' under the assumption that no physician's strategy is worse than always choosing the worse performing of `treat all' and `treat none'. These results clarify when clinical data support relying on physician discretion over the trial-average recommendation and when stronger justification is required.
\end{abstract}

\section{Introduction}
A common 
critique of evidence-based medicine is that clinical trials estimate \textit{average} treatment effects across heterogeneous
populations, and that average effects are poor guides to treatment decisions for individual patients.
A treatment that is
beneficial on average may be harmful for a particular subgroup, and vice versa.
Skilled clinicians can in principle leverage knowledge of individual patients to
do better than rigid adherence to trial-derived
recommendations. \citet{deaton2018understanding} express this criticism forcefully (italics their own): ``If \textit{your} physician tells \textit{you} that she endorses evidence-based medicine, and that the drug will work for \textit{you} because an RCT has shown that ‘it works’, it is time to find a physician who knows that \textit{you} and the \textit{average} are not the same." However, patients are hard pressed to decide whether to trust a doctor who claims to be able to outperform the `naive' strategy of assigning everyone the treatment that performed best in a trial. Moreover, doctors may themselves be unsure whether to trust their instincts over `the evidence'.

Imagine that a recent RCT has shown that a certain major surgical operation is on average harmful, but your doctor recommends operating nonetheless based on their assessment of your personal characteristics. For example, in randomized trials, transcatheter aortic-valve replacement performed better than surgical valve replacement on important short-term composite outcomes \citep{mack2019transcatheter}, but in practice many surgeons and heart teams still prefer surgical replacement for some patients. In this work, we consider how one might bring evidence from trials and observational data to decide whether to place more trust in the average effect or personalized physician judgment.

Suppose that outcomes under usual care outperform the outcomes in both arms of a randomized experiment conducted in the same population. A textbook example concerning lung cancer patients comes from \citet{hernan2024causal}[Fine Point 22.7], see also \citet{sarvet2023perspective,sarvet2024rejoinder}. Then Deaton and Cartwright's argument that one should trust their physician over a trial is valid in that following the doctors' advice gives better outcomes on average than universally recommending the treatment that performed best in the trial. In this setting, some of the patients might still discern that \textit{their} particular doctor is bad or malicious, or that patients similar to them tend to receive poor care. A reviewer pointed out that this possibility should be taken seriously, especially given documented racist malpractice in the United States. 


Another possibility is that usual care is outperformed by both treatment and control arms. Then, patients identified by their doctors as particularly likely to be harmed by treatment would, on average, actually be more likely to benefit (and/or the converse). We rule out this option, which is implausible in many but not all settings, and thus focus on the remaining interesting scenario in which, without loss of generality, usual care outperforms control but is itself outperformed by treatment. In this scenario, we will derive a bound on the proportion of encounters with doctors having personal strategies that outperform `treat all' under the assumption that no doctor's personal strategy is worse than treating nobody. This bound can offer some guidance about whether to trust one's doctor.

\section{Setting}

Let $\mathcal P$ denote the target population of patient/doctor encounters for which we observe
usual care treatment decisions and outcomes. All expectations and
probabilities below are with respect to this target population unless otherwise
noted. We consider a binary treatment $A\in\{0,1\}$, where $A=1$ denotes
treatment and $A=0$ denotes control. The outcome $Y\in[0,1]$ is interpreted
as a bounded utility, with larger values preferred. $Y$ could represent a scaled clinical
outcome, such as disease free survival over a fixed horizon divided by the
length of that horizon, or a weighted average of (non-)occurrence indicators of multiple outcomes including side effects. A binary outcome is a
special case in which $Y=1$ denotes the desirable outcome and $Y=0$ denotes
the undesirable outcome.

We write $Y(1)$ and $Y(0)$ for the potential utilities under treatment and
control. We assume that the expected utilities under the static strategies
`treat all' and `treat none' are available for the target population of encounters
$\mathcal P$. That is, we assume that we can identify or consistently estimate
\[
V_T \equiv E_{\mathcal P}\{Y(1)\},
\qquad
V_C \equiv E_{\mathcal P}\{Y(0)\}.
\]
This would hold directly given a pragmatic randomized trial in a random sample from the
target population. More generally, it holds after standardizing or
transporting trial-arm mean outcomes to $\mathcal P$, provided the required
transportability assumptions hold, see, for example \citet{pearl2022external,dahabreh2019generalizing,graham2024towards}. If $S$ indicates trial
participation and $X$ denotes measured covariates sufficient for
transportability, one sufficient condition is
\[
E\{Y(a)\mid X,S=\text{trial}\}
=
E\{Y(a)\mid X\},
\qquad a\in\{0,1\},
\]
together with consistency and positivity of trial participation over the
covariate support of $\mathcal P$. Under such conditions,
\[
V_a
=
E_{\mathcal P}\left[
E\{Y\mid A=a,S=\text{trial},X\}
\right],
\qquad a\in\{0,1\}.
\]
Alternatively, given observational data under usual care alone, the relevant quantities are identified under the standard causal assumptions: positivity, consistency, and exchangeability \citep{hernan2024causal}. 

Usual care arises from the treatment strategies of a population of doctors.
Let $\mathcal D$ denote the doctor superpopulation. We write $D$ for a
doctor from the doctor population $\mathcal D$, drawn from a random encounter in $\mathcal P$. For a fixed doctor $d$, let $A^d$ denote the treatment that
doctor $d$ would assign to a randomly drawn patient from $\mathcal P$.
Define
\[
p_d \equiv P_{\mathcal P}(A^d=1),
\qquad
V^d \equiv E_{\mathcal P}\{Y(A^d)\}.
\]
In words, $p_d$ is the proportion of encounters that would be treated if all encounters followed the treatment strategy of doctor $d$. $V^d$ is the average outcome that would be observed if all encounters followed the treatment strategy of doctor $d$. $p_D$ and $V^D$ denote the random values of these quantities obtained by drawing a doctor randomly from the distribution of encounters. 

We assume exchangeable assignment of patients to doctors in the sense that doctors' case mixes are independent of their treatment rules. 
\begin{assumption}[No systematic association between doctor rules and case mix]\label{assumption:case_mix_exchangeability}
Let \(Z\) denote the patient characteristics relevant to treatment decisions
or potential utilities, so that \[
E[Y(a)|Z,D]=E[Y(a)|Z]
\] for all $a$. For an individual doctor $d$, let \(F_d\) denote the distribution of \(Z\) among
patients seen by $d$, and let $g_d: Z\mapsto \{0,1\}$ denote doctor $d$'s
treatment rule. Let $F_D$ and $g_D$ denote random values of these variables obtained by drawing doctors (and their strategies) from the population of encounters. We assume
\[
F_D \independent g_D .
\]
\end{assumption} 
\noindent This assumption could be violated, for example, if more patients who would benefit from treatment went to academic medical centers, and doctors at academic medical centers had more aggressive treatment rules due to more recent training.

Under Assumption \ref{assumption:case_mix_exchangeability}, the target population mean utility under usual care is
\[
V_U \equiv E[V^D] = E\{Y(A^D)\},
\]
and the target population proportion treated under usual care is
\[
p \equiv E[p_D] = P(A^D=1).
\]
Throughout the main analysis we focus on the case where
\[
V_T > V_U > V_C,
\]
so that the static strategy `treat all' performs best in the trial, usual care improves on
treating none, but usual care does not outperform treating everyone. We will also make the charitable assumption that no doctor's judgment based strategy performs worse than assigning control to everybody, i.e. 
\begin{assumption}\label{ass:no_bad_doctors}
$V^D\geq V_C$ almost surely.
\end{assumption} 
We focus on a setting in which the data do not contain doctor identifiers for each encounter and/or there is one encounter per doctor. Thereby we avoid the complications of incorporating information from repeated encounters, which we discuss briefly in Remark \ref{rem:ids}. 

\section{Some preliminary results}
In the case of a binary $Y$, \citet{tian2000probabilities} derived a bound on the proportion counterfactually harmed by treatment, $\pi_H\equiv Pr(Y(1)<Y(0))$:
\begin{equation}\label{eq:piH_bound}
\pi_H \in [\phi_L,\phi_U],
\end{equation}
where
\begin{align*}
\phi_L &= \max\{0,\,V_C - V_U\}, \\
\phi_U &= \min\left\{
\begin{aligned}
&V_C, \\
&1 - V_T, \\
&P(Y=1,A=0) + P(Y=0,A=1), \\
&V_C - P(Y=1,A=0) + \bigl(1 - V_T - P(Y=0,A=1)\bigr)
\end{aligned}
\right\}.\footnote{Other approaches, e.g. \citet{shahn2025identification}, tighten this bound to the point of identification under additional strong assumptions involving covariates.}
\end{align*}
 Instead of committing to a cross-world counterfactual notion of harm \citep{sarvet2023perspective,sawant2026counterfactual,mueller2025meaning,sarvet2024rejoinder}, we can also interpret the bounds on $\pi_H$ as bounds on how much better any individual doctor's strategy is compared to `treat everybody'. An ideal doctor would give treatment for all $i$ such that $Y_i(1)> Y_i(0)$ and control to all $i$ such that $Y_i(1)<Y_i(0)$. In that sense, when $Y$ is binary, we regard bounding $\pi_H$ as a first step toward understanding potential gains from doctor judgment. 

 We can also check the data for empirical evidence that doctors are fruitfully exploiting effect modifiers in making their decisions. Let $p \equiv P(A=1)$ denote the proportion treated in the usual care arm. Define
\begin{equation}\label{eq:gain}
G \equiv V_U - \bigl(pV_T + (1-p)V_C\bigr),
\end{equation}
which also can be expressed as $cov(A,Y(1)-Y(0))$. It is possible that doctors follow a range of treatment strategies based on a range of factors. Suppose, for example, that some doctors always treat, some never treat, and others treat only patients with red hair. If the doctors based decisions on variables that are completely non-informative, in the sense that these variables are not (qualitative) effect modifiers, then: the expected outcome among those treated in the usual care arm would be $V_T$; the expected outcome among those untreated in the usual care arm would be $V_C$; and $V_U$ would be equal to $pV_T + (1-p)V_C$. Also, if $V_U>pV_T + (1-p)V_C$, this implies that doctors' treatment decisions are correlated (in the right direction) with some effect modifier \citep{stensrud2024optimal}.  That is, doctors are, knowingly or not, incorporating some useful information into their treatment decisions. Thus, $G>0$ suggests that we might study doctors' reasoning about treatment decisions to learn how to design dynamic treatment strategies. On the other hand, $G<0$ does not necessarily imply that no doctors are incorporating effect modifiers into their decisions.

\section{When to trust the expert?}
Now, $G>0$ implies some doctors are making use of some effect modifiers, but it does not imply that any doctor is actually outperforming `treat all' overall, as would be desired to trust a doctor's recommendation over the best treatment in the trial. We now consider bounds on the proportion of encounters with such high performing doctors. 

For fixed $d$, let $\delta_d\equiv V^d-V_T$ denote the effect of strategy $d$ compared to `treat all', so that $\delta_D$ is a random variable when $D$ is drawn from $\mathcal{D}$. We now bound the proportion of encounters with doctors whose strategies outperform `treat all' by any given amount $\delta^*$, i.e.\ $P(\delta_D\geq\delta^*)$, where the probability is over the draw of $D$ from the population of encounters.
\begin{theorem}\label{theorem1}
 Suppose $Y\in[0,1]$, $V_T>V_U>V_C$, and Assumptions \ref{assumption:case_mix_exchangeability} and \ref{ass:no_bad_doctors} hold. Define
\[
h_{\max}\equiv \min\{V_C,\,1-V_T\}.
\]
For $0<\delta^*\leq h_{\max}$, 
\begin{equation}\label{eq:bound}
P_D(\delta_D\geq\delta^*)
\leq
\min\left\{
\frac{V_U-V_C}{V_T-V_C+\delta^*},
\frac{1-(V_T-V_C)+(V_U-V_C)-p}{2\delta^*}
\right\}.
\end{equation}
For $\delta^*=0$,
\begin{equation}\label{eq:bound_delta0}
P_D(\delta_D\geq\delta^*=0) = P_D(V^D\geq V_T)
\leq
\frac{V_U-V_C}{V_T-V_C}.
\end{equation}
For $\delta^*>h_{\max}$,
\[
P_D(\delta_D\geq\delta^*)=0.
\]
If $\delta^*>\phi_U$ and $Y$ is binary, then
\[
P_D(\delta_D\geq\delta^*)= P_D(V^D\geq V_T+\delta^*)=0.\footnote{When $\delta^*>\pi_H$ for binary $Y$, recall that $P(\delta_D>\delta^*)=0$ as $\pi_H$ is the bound on the performance of any individual doctor. However, we cannot observe $\pi_H$, and the data are consistent with values of $\pi_H8$ as high as the upper bound $\phi_U$.}
\]

The bounds above are sharp in the sense that they are obtained by a distribution compatible with the observed marginal expectations ($V_U$, $V_C$, $V_T$, and $p$). 
\end{theorem}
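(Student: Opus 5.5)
Write $q\equiv P_D(\delta_D\geq\delta^*)$. The approach is a Markov-type argument applied to two nonnegative random variables built from doctor-level quantities, each pinned down in expectation by $V_U$, $V_C$, $V_T$ and $p$.

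\textbf{First bound.} By Assumption~\ref{assumption:case_mix_exchangeability}, $V_U=E[V^D]$. By Assumption~\ref{ass:no_bad_doctors}, $V^D-V_C\geq 0$ almost surely. On the event $\{\delta_D\geq\delta^*\}$ we have $V^D-V_C\geq V_T-V_C+\delta^*$. Markov's inequality then gives
\[
V_U-V_C=E[V^D-V_C]\geq q\,(V_T-V_C+\delta^*),
\]
which is the first term of \eqref{eq:bound}. Setting $\delta^*=0$ gives \eqref{eq:bound_delta0}.

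\textbf{Second bound.} Here I would use a per-doctor ceiling on how much $V^d$ can exceed the value $p_dV_T+(1-p_d)V_C$ of randomizing with probability $p_d$. Write
\[
V^d=E\{Y(1)A^d\}+E\{Y(0)(1-A^d)\}.
\]
Since $Y\in[0,1]$, we have $E\{Y(1)A^d\}\leq p_d$ and $E\{Y(0)(1-A^d)\}\leq V_C$. Also $E\{Y(1)A^d\}=V_T-E\{Y(1)(1-A^d)\}\leq V_T$. Combining these with the identity above gives
\[
V^d\leq \min\{p_d+V_C,\;V_T+(1-p_d)\}.
\]
Hence $\delta_d\leq\min\{p_d-(V_T-V_C),\,1-p_d\}$. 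In particular $\delta_d\leq\min\{V_C,\,1-V_T\}=h_{\max}$, since $p_d\leq 1$ in the first term and $p_d\geq V_T-V_C+\delta_d$ gives the second. This yields $q=0$ for $\delta^*>h_{\max}$. For the binary case, $\delta_d\leq\pi_H\leq\phi_U$ follows because the optimal individualized rule gains exactly $\pi_H$ over `treat all'.

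For the second term I need a nonnegative quantity $W_D$ satisfying two conditions: $W_D\geq 2\delta^*$ whenever $\delta_D\geq\delta^*$, and
\[
E[W_D]=1-(V_T-V_C)+(V_U-V_C)-p.
\]
My candidate is
\[
W_d=(1-p_d)+(V^d-V_T)-(V_T-V_C)+(p_d-\ldots),
\]
but the clean choice is
\[
W_d=\bigl(1-p_d-\delta_d\bigr)+\bigl(p_d-(V_T-V_C)-\delta_d\bigr)+\ldots
\]
Both brackets are nonnegative by the ceiling above. The target is a combination that averages to $1-V_T+V_U-p$ and is at least $2\delta^*$ on the event. I would try $W_d=1-p_d-(V_T-V^d)$, i.e.\ $W_d=1-p_d+\delta_d$. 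Its mean is $1-p+V_U-V_T$, which equals the stated numerator. On the event, $1-p_d\geq\delta_d\geq\delta^*$, so $W_d\geq 2\delta^*$.

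Nonnegativity of $W_d$ in general is the main obstacle. It requires $V^d\geq V_T-(1-p_d)$, and this does hold: $V^d\geq E\{Y(1)A^d\}=V_T-E\{Y(1)(1-A^d)\}\geq V_T-(1-p_d)$. Markov's inequality applied to $W_D$ then gives the second term.

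\textbf{Sharpness.} I would build two-point doctor distributions. With probability $q$, a doctor attains the extremal configuration: treat a fraction $p_d$, where treated patients have $Y(1)=1$ and untreated patients carry all the $Y(0)$ mass with $Y(1)=0$, so that $\delta_d=\delta^*$ and the relevant ceiling is tight. With probability $1-q$, a doctor sits at $V^d=V_C$ (for the first term) or at $W_d=0$ (for the second). I then choose the $p_d$ values and $q$ so that the mixtures reproduce $V_U$ and $p$ while meeting the smaller of the two bounds with equality. I expect checking feasibility of the required case mix of $Y(0),Y(1)$ across the two bound regimes to be the delicate bookkeeping step.
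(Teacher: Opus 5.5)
Your two upper bounds are correct. The Markov argument for the first term is the paper's. Your $W_d=1-p_d+\delta_d$ is a compact repackaging of the paper's good/bad averaging argument, using the same two per-doctor constraints $p_d\le 1-\Delta+X^d$ (your $W_d\ge 0$) and $p_d\le 1+\Delta-X^d$ (your $1-p_d\ge\delta_d$), with $\Delta=V_T-V_C$ and $X^d=V^d-V_C$.

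Your derivation of $\delta_d\le h_{\max}$, however, is wrong as written. From $\delta_d\le\min\{p_d-\Delta,\,1-p_d\}$, using $p_d\le 1$ gives $\delta_d\le 1-V_T+V_C$, not $V_C$. Maximizing over $p_d$ gives only $\delta_d\le(1-\Delta)/2=\{V_C+(1-V_T)\}/2$, which is at least $h_{\max}$ and can be much larger: $V_C=0.1$, $V_T=0.3$ gives $0.4$ against $h_{\max}=0.1$. The fix is direct. Either note $V^d=E\{Y(1)A^d\}+E\{Y(0)(1-A^d)\}\le V_T+V_C$ together with $V^d\le 1$, or argue as the paper does: $\delta_d=E[(1-A^d)\{Y(0)-Y(1)\}]\le E[\{Y(0)-Y(1)\}_+]\le\min\{V_C,1-V_T\}$.

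The substantive gap is sharpness, which is part of the statement but which you only outline. The one concrete configuration you describe does not work. If every treated patient has $Y(1)=1$ and every untreated patient has $Y(0)=1$, then $V^d=1$ and $\delta_d=1-V_T$, which equals $\delta^*$ only in the edge case $\delta^*=1-V_T$. Tightness of $1-p_d\ge\delta_d$ only requires the untreated to lie in the harmed stratum; treated patients may have $Y(1)=0$.

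What is missing is a single law of $(Y(1),Y(0))$, common to all doctors since every $V^d$ is defined on $\mathcal P$. It must support good doctors with gain exactly $\Delta+\delta^*$ and bad doctors with the compensating gain, with enough freedom in $p_d$ to reproduce the observed $p$. The paper uses the following strata masses:
\begin{itemize}
\item $\delta^*$ harmed;
\item $\Delta+\delta^*$ helped;
\item $V_C-\delta^*$ and $1-V_T-\delta^*$ unaffected.
\end{itemize}
This is precisely where $\delta^*\le h_{\max}$ is used. Write $r=V_U-V_C$, let $\pi$ be the minimum in \eqref{eq:bound}, and set $\bar X^b=\{r-\pi(\Delta+\delta^*)\}/(1-\pi)$. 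Under this law, good doctors can realize any $p_d\in[\Delta+\delta^*,1-\delta^*]$, and bad doctors with gain $\bar X^b$ can realize any $p_d\in[\bar X^b,1-\Delta+\bar X^b]$. Mixtures therefore cover $[r,\,1-\Delta+r-2\delta^*\pi]$.

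Combined with the necessary condition $p\ge r$ (from $X^d=E[A^d\{Y(1)-Y(0)\}]\le p_d$), which your sketch does not use, every feasible $p$ is matched. In the regime where the first term binds, bad doctors sit at $V^d=V_C$, and $p$ must generally be hit with interior rather than extremal $p_d$ values. A two-point construction at the extremal treatment probabilities you propose would not suffice there.
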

\begin{proof}
See Appendix A.
\end{proof}

The term $V_U-V_C$ appearing in (\ref{eq:bound}) and (\ref{eq:bound_delta0}) is sometimes referred to as a population intervention effect \citep{hubbard2008population,laurendeau2025improved}, which is a scaled version of the average treatment effect in the treated. Inequality (\ref{eq:bound_delta0}) says that the maximum proportion of encounters with doctors who do at least as well as `treat all' is the ratio of the population intervention effect to the ATE. 

\begin{remark}[Estimation and inference]
\label{rem:estimation}
The bound in Theorem \ref{theorem1} motivates some simple estimators. For any fixed $\delta^*>0$, we can replace $V_T$, $V_C$, $V_U$, and $p$ by consistent estimators in the expressions in \eqref{eq:bound} and then take the minimum. This is a consistent plug-in estimator under mild conditions.

Consider now the two expressions in the minimum in \eqref{eq:bound}, which we will denote $B_1(\delta^*)$ and $B_2(\delta^*)$. Suppose that
$$
\left|B_1(\delta^*)-B_2(\delta^*)\right|\geq \epsilon
$$
for some $\epsilon>0$. Under standard regularity conditions, we can use the conventional delta-method and bootstrap inference for inference. However, when $B_1(\delta^*)$ and $B_2(\delta^*)$ are equal, the minimum is nondifferentiable and these procedures are not necessarily valid.

When $\delta^*=0$, the bound is simply
$$
\frac{V_U-V_C}{V_T-V_C}.
$$
which is a smooth function of the underlying means when $V_T>V_C$, and we can use standard delta-method or bootstrap inference.
\end{remark}

\begin{remark}[Proportions of encounters vs. proportions of doctors]\label{rem:encounter_vs_doctor}
    All probability statements are given with respect to the population of encounters as opposed to the population of doctors. That is, we are bounding `proportion of encounters with doctors whose strategies are better than treat all' as opposed to `proportion of doctors whose strategies are better than treat all'.  If encounter volume varies across doctors, the encounter frame is arguably the relevant frame for a patient entering usual care. Suppose that better doctors tend to have more encounters, but patients are exchangeably sorted to doctors. Then the proportion of doctors with superior strategies would understate the probability that a given patient's doctor has a superior strategy, because patients are more likely to see doctors with more encounters, who in this scenario tend to be better. A doctor wondering whether to trust their personal judgment over a trial, however, would be interested in the proportion of doctors with superior strategies. The bounds we provide are therefore less directly relevant to doctors than patients. However, if all doctors have approximately the same number of encounters, the two frames approximately coincide and the bounds are relevant to all parties.
\end{remark}


\begin{remark}[Multiple identified observations per doctor]
\label{rem:ids}
The main analysis defines $D$ as the doctor attached to a randomly
drawn encounter and therefore concerns a distribution
of doctor strategies that is weighted by encounters. When doctor identifiers are recorded, we can also
consider the doctor frame. Let $\widetilde D$ denote a doctor drawn with
equal probability from the target population of doctors, or according to
appropriate doctor level sampling weights.

Suppose each sampled doctor contributes $n$ conditionally independent
encounters. To connect these observations to $V^d$, rather than merely to
doctor $d$'s performance on their own case mix, suppose either that each
doctor's encounters are drawn from $\mathcal P$ or that sufficient
information is available to standardize each doctor's observations to
$\mathcal P$. As $n\to\infty$, $V^d$ and $p_d$ can then be identified for
each sampled doctor. Consequently, the empirical distribution of
$(V^d,p_d)$ across sampled doctors is identified. If the number of
representatively sampled doctors is also large, this identifies the
distribution of $(V^{\widetilde D},p_{\widetilde D})$ in the doctor
population and hence
\[
 \Pr\!\left(V^{\widetilde D}\geq V_T+\delta^*\right)
\]
directly. The encounter weighted distribution considered in the main
analysis can likewise be recovered by weighting doctors according to
their encounter volumes.

For finite $n$, the doctor level distribution is generally only partially
identified because observed variation among doctor-specific averages
combines true between doctor heterogeneity with within doctor sampling
variation. Nevertheless, repeated encounters provide restrictions beyond
the marginal means used in Theorem~\ref{theorem1}. For example, let
$R_{dj}$ denote a case mix weighted encounter outcome satisfying
$E(R_{dj}\mid d)=V^d$. Conditional independence of encounters gives, for
$k\leq n$,
\[
 E\left\{\prod_{j=1}^k R_{\widetilde D j}\right\}
 =
 E\left\{(V^{\widetilde D})^k\right\}.
\]
Thus, $n$ encounters per doctor identify the first $n$ moments of the doctor weighted distribution of $V^d$. Sharp bounds on $\Pr(V^{\widetilde D}\geq V_T+\delta^*)$ could therefore be obtained by
optimizing over distributions on $[V_C,1]$ having these identified
moments. These bounds weakly tighten as $n$ increases and converge to the
target probability as the full moment sequence becomes identified.
Cross products involving treatment indicators can similarly identify
moments involving $p_{\widetilde D}$ and further restrict the joint
distribution of $(V^{\widetilde D},p_{\widetilde D})$. We do not derive
these intermediate bounds here.
\end{remark}

\begin{remark}[Refinements using the joint law of $(A,Y)$]
Theorem 1 gives a closed form bound based on $V_T$, $V_C$, $V_U$, and
$p$. This bound does not use all information in the usual care joint law
$P(A,Y)$. A sharper bound could be obtained by optimizing over compatible
joint distributions of $(Y(1),Y(0),A^D,D)$. Specifically, for binary $Y$ we can maximize
\[
P_D(V^D\ge V_T+\delta^*)
\]
subject to
\[
E[Y(1)]=V_T,\qquad E[Y(0)]=V_C,
\]
\[
P(A^D=a,Y(A^D)=y)=P(A=a,Y=y),\qquad a,y\in\{0,1\},
\]
and
\[
V^D\ge V_C \quad\text{almost surely}.
\]
This is an optimization over linear programs indexed by probability of harm after representing doctor rules by their
treatment probabilities within the four principal strata of
$(Y(1),Y(0))$. Its value is weakly smaller than the bound in Theorem \ref{theorem1},
because it imposes all constraints used in Theorem \ref{theorem1} plus additional
constraints from the joint law of $(A,Y)$. For continuous $Y$, additional complications arise. In either case, Theorem \ref{theorem1} should be viewed as a simple analytic bound rather than the
most informative bound obtainable.
\end{remark}

Examining (\ref{eq:bound}) and (\ref{eq:bound_delta0}), a larger ATE, a larger desired doctor
advantage $\delta^*$, and a smaller population intervention effect
all lead to smaller upper bounds on the proportion of encounters with
doctors beating `treat all'. When $\delta^*>0$, the usual-care
treatment probability $p$ may further tighten the bound. For example,
suppose that $V_T-V_C=0.25$ and $V_U-V_C=0.05$. Equation (\ref{eq:bound_delta0}) implies
that at most
\[
\frac{0.05}{0.25}=20\%
\]
of encounters are with doctors whose personal strategies are at least
as good as `treat all', regardless of $p$. For doctors outperforming
`treat all' by $0.05$, however, equation (\ref{eq:bound}) gives
\[
\begin{aligned}
\min\left\{
  \frac{0.05}{0.25+0.05},
  \frac{1-0.25+0.05-p}{2(0.05)}
\right\}
&=\min\{0.167,\,8-10p\}.
\end{aligned}
\]
Thus, the upper bound is approximately $17\%$ when $p=0.75$ because
the first component of the minimum binds, but only $10\%$ when $p=0.79$ because the
second component of the minimum binds.

On the other hand, suppose that $V_T-V_C=0.05$, $V_U=0.20$,
$V_C=0.17$, and $p=0.97$. Equation (\ref{eq:bound_delta0}) allows as many as
\[
\frac{0.03}{0.05}=60\%
\]
of encounters to be with doctors doing at least as well as `treat
all.' For doctors outperforming `treat all' by $0.02$, equation (\ref{eq:bound})
instead gives
\[
\begin{aligned}
\min\left\{
  \frac{0.03}{0.05+0.02},
  \frac{1-0.05+0.03-0.97}{2(0.02)}
\right\}
&=\min\{0.429,\,0.25\} \\
&=0.25.
\end{aligned}
\]
Thus, even though a relatively large proportion of encounters may be with doctors whose strategies are at least as good as `treat all' when the ATE is small, the observed treatment proportion can substantially restrict the proportion of encounters with doctors achieving a meaningful
improvement over `treat all'.

How exactly should these bounds inform a patient's decision whether to trust their doctor? We now briefly shift from frequentist to subjective probability. The bounds (\ref{eq:bound}) and (\ref{eq:bound_delta0}) apply to the fraction of encounters in the population who outperform `treat all', but a patient's degree of belief about their particular encounter is a subjective probability that may differ. To illustrate a point, we describe a simplified setting where there are two types of doctors. Proportion $\pi_{\delta^*}$ are `good' doctors with advantage $\phi_U>V^d-V_T=\delta^*>0$ and $1-\pi_{\delta^*}$  are `bad' doctors with $V^d-V_T=\frac{V_U-V_T-\pi_{\delta^*}\delta^*}{1-\pi_{\delta^*}}\equiv V_{bad}<0$ (the value implied by $V_U-V_T$ and $\delta^*$). We also suppose that all doctors have similar numbers of encounters. Then we can consider the bound (\ref{eq:bound}) as a generous prior probability that a random doctor is `good'. But suppose that a patient thinks highly of their doctor because their doctor has performed well in the past or went to a prestigious medical school. This patient does not accept $(\ref{eq:bound})$ as a bound on the probability that \textit{their} doctor is good. For a fixed $\delta^*$, let $\tilde{\pi}_{\delta^*}$ denote a patient's degree of belief that their doctor is good. To rationally trust one's doctor's recommendation over `treat all', it must be the case that $\tilde{\pi}_{\delta^*}>\frac{-V_{bad}}{\delta^*-V_{bad}}$. If $\frac{-V_{bad}}{\delta^*-V_{bad}}$ is much greater than bound (\ref{eq:bound}), this would mean that the patient should require very strong evidence of their doctor's superiority to update their prior sufficiently to trust. Also, the higher $\delta^*$ is (i.e. the better `good' doctors are), the lower the threshold probability $\frac{-V_{bad}}{\delta^*-V_{bad}}$ becomes because there is more to be gained from trusting a good doctor. However, higher $\delta^*$ also reduces the bound (\ref{eq:bound}), making it a priori less likely that a doctor should be trusted. 

\section{Discussion}
 To make our key points, we have ignored many complications that would occur in practice. We limited our analysis to treatments occurring at a single time point, excluding settings where `monitor the situation and then decide' is an option. Clinical skill might in reality often take the form of knowing \textit{when} to treat as opposed to whether to treat. We also assumed that doctors' case mixes were not associated with their treatment strategies under usual care, which is unlikely to be exactly true in reality unless patients are randomly or haphazardly assigned to doctors. We further ignored practical difficulties in defining continuous outcome utilities. For example, in the aortic-valve replacement example from the introduction, surgical replacement is often preferred in younger patients because they would derive more utility from resulting improved durability, that is, longer time to replacement, compared to older patients with less time to live \citep{mack2019transcatheter}. It is non-trivial to weigh such expected durability gains against short term realized outcomes to arrive at a single scalar utility. Furthermore, we did not consider doctor self assessment of their level of confidence, e.g. how to account for a doctor stating they are \textit{especially} sure that a particular patient would be harmed by treatment based on that patient's characteristics. Doctor confidence level is potentially important information that could be available to both doctors and patients at decision time. Relatedly, suppose that many doctors switch their strategies to `treat all' after the trial results are released. Then, encounters in which doctors withhold treatment would be higher confidence compared to such encounters under pre-trial usual care. This could limit the applicability of results based on usual care data collected before the trial.
 
Qualitatively, our results agree with common sense. A higher ATE in a trial and worse outcomes under usual care \textit{should} both make one less likely to accept a doctor’s recommendation over trial evidence. We agree with \citet{deaton2018understanding} that in many cases it is important to `find a physician who knows that you and the average are not the same.' However, it is also important to find a physician who is realistic about their ability to distinguish you from the average. We hope that by formalizing and quantifying some relevant considerations in a simplified setting, we encourage further work into the important question of how evidence based on averages should impact decisions for individuals.

\section*{Acknowledgments} We wish to thank Amit Sawant, Marco Piccininni, and Inna Kleyman for helpful discussions. We used GPT-5.5 and GPT-5.6 to revise the language and to assist with technical arguments and proofs. We have reviewed and edited all text after use of the AI tool and take full responsibility for the content of the manuscript.

\bibliographystyle{plainnat}
\bibliography{bibliography}

@article{sarvet2023perspective,
  title={Perspective on ‘harm’ in personalized medicine},
  author={Sarvet, Aaron L and Stensrud, Mats J},
  journal={American Journal of Epidemiology},
  volume={194},
  number={6},
  pages={1743--1748},
  year={2025},
  publisher={Oxford University Press}
}

@article{graham2024towards,
  title={Towards a unified theory for semiparametric data fusion with individual-level data},
  author={Graham, Ellen and Carone, Marco and Rotnitzky, Andrea},
  journal={arXiv preprint arXiv:2409.09973},
  year={2024}
}

@incollection{pearl2022external,
  title={External validity: From do-calculus to transportability across populations},
  author={Pearl, Judea and Bareinboim, Elias},
  booktitle={Probabilistic and causal inference: The works of Judea Pearl},
  pages={451--482},
  year={2022}
}

@article{dahabreh2019generalizing,
  title={Generalizing causal inferences from individuals in randomized trials to all trial-eligible individuals},
  author={Dahabreh, Issa J and Robertson, Sarah E and Tchetgen, Eric J and Stuart, Elizabeth A and Hern{\'a}n, Miguel A},
  journal={Biometrics},
  volume={75},
  number={2},
  pages={685--694},
  year={2019},
  publisher={Oxford University Press}
}

@article{sarvet2024rejoinder,
  title={Rejoinder to ``Perspectives on `harm' in personalized medicine--an alternative perspective''},
  author={Sarvet, Aaron L and Stensrud, Mats J},
  journal={American Journal of Epidemiology},
  volume={194},
  number={6},
  pages={1752--1755},
  year={2025},
  publisher={Oxford University Press}
}

@article{tian2000probabilities,
  title={Probabilities of causation: Bounds and identification},
  author={Tian, Jin and Pearl, Judea},
  journal={Annals of Mathematics and Artificial Intelligence},
  volume={28},
  number={1},
  pages={287--313},
  year={2000},
  publisher={Springer}
}

@book{hernan2024causal,
  title={Causal Inference: What If},
  author={Hernan, M.A. and Robins, J.M.},
  isbn={9781420076165},
  lccn={2022050839},
  series={Chapman \& Hall/CRC Monographs on Statistics \& Applied Probab},
  year={2024},
  publisher={CRC Press}
}

@article{shahn2025identification,
  title={Identification and Estimation of Joint Potential Outcome Distributions from a Single Study},
  author={Shahn, Zach and Madigan, David},
  journal={arXiv preprint arXiv:2509.20506},
  year={2025}
}

@article{deaton2018understanding,
  title={Understanding and misunderstanding randomized controlled trials},
  author={Deaton, Angus and Cartwright, Nancy},
  journal={Social science \& medicine},
  volume={210},
  pages={2--21},
  year={2018},
  publisher={Elsevier}
}

@article{stensrud2024optimal,
  title={Optimal regimes for algorithm-assisted human decision-making},
  author={Stensrud, Mats J and Laurendeau, Julien David and Sarvet, Aaron Leor},
  journal={Biometrika},
  volume={111},
  number={4},
  pages={1089--1108},
  year={2024},
  publisher={Oxford University Press}
}

@article{hubbard2008population,
  title={Population intervention models in causal inference},
  author={Hubbard, Alan E and Van der Laan, Mark J},
  journal={Biometrika},
  volume={95},
  number={1},
  pages={35--47},
  year={2008},
  publisher={Oxford University Press}
}

@article{laurendeau2025improved,
  title={Improved bounds and inference on optimal regimes},
  author={Laurendeau, Julien D and Sarvet, Aaron L and Stensrud, Mats J},
  journal={Journal of the American Statistical Association},
  pages={1--13},
  year={2025},
  publisher={Taylor \& Francis}
}

@article{mueller2025meaning,
  title={The meaning of “harm” in personalized medicine—an alternative perspective},
  author={Mueller, Scott and Pearl, Judea},
  journal={American Journal of Epidemiology},
  volume={194},
  number={6},
  pages={1749--1751},
  year={2025},
  publisher={Oxford University Press}
}

@article{sawant2026counterfactual,
  title={Counterfactual Harm: A Counter-argument},
  author={Sawant, Amit N and Stensrud, Mats J},
  journal={American Journal of Epidemiology},
  pages={kwag064},
  year={2026},
  publisher={Oxford University Press}
}

@article{mack2019transcatheter,
  title   = {Transcatheter Aortic-Valve Replacement with a Balloon-Expandable Valve in Low-Risk Patients},
  author  = {Mack, Michael J. and Leon, Martin B. and Thourani, Vinod H. and Makkar, Raj R. and Kodali, Susheel K. and Russo, Michael and Kapadia, Samir R. and Malaisrie, S. Chris and Cohen, David J. and Pibarot, Philippe and Leipsic, Jonathon and Hahn, Rebecca T. and Blanke, Philipp and Williams, Mathew R. and McCabe, James M. and Brown, David L. and Babaliaros, Vasilis and Goldman, Steven and Szeto, Wilson Y. and Genereux, Philippe and Pershad, Ashok and Pocock, Stuart J. and Alu, Maria C. and Webb, John G. and Smith, Craig R.},
  journal = {New England Journal of Medicine},
  volume  = {380},
  number  = {18},
  pages   = {1695--1705},
  year    = {2019},
  doi     = {10.1056/NEJMoa1814052}
}

\section*{Appendix}
\subsection*{Appendix A: Proof of Theorem \ref{theorem1}}

For any physician $d$, Assumption \ref{ass:no_bad_doctors}
states that $V^d\geq V_C$. Therefore, for $\delta^*\geq 0$,
\[
V_U
=
E_D[V^D]
\geq
P_D(\delta_D\geq\delta^*)(V_T+\delta^*)
+
\{1-P_D(\delta_D\geq\delta^*)\}V_C.
\]
Rearranging gives
\[
P_D(\delta_D\geq\delta^*)
\leq
\frac{V_U-V_C}{V_T+\delta^*-V_C}.
\]
This proves the first inequality in \eqref{eq:bound}
and the inequality in \eqref{eq:bound_delta0}.

It remains to establish the second inequality and sharpness. Let
\[
\Delta=V_T-V_C,\qquad
r=V_U-V_C,\qquad
X^d=V^d-V_C.
\]
For a physician with $X^d=x$, bounded outcomes imply
\[
|\Delta-x|
=
\left|
E\left[(1-A^d)\{Y(1)-Y(0)\}\right]
\right|
\leq 1-p_d.
\]
Consequently,
\[
p_d\leq p_d^{\max}(x)=
\begin{cases}
1-\Delta+x, & 0\leq x\leq\Delta,\\
1+\Delta-x, & x\geq\Delta.
\end{cases}
\]

Now suppose a proportion $\pi>0$ of physicians are
`good', with $X^D\geq\Delta+\delta^*$, where $\delta^*>0$.
Since $r<\Delta$, we have $\pi<1$.
Let $\bar X^g\geq\Delta+\delta^*$ denote their mean gain.
Their average treatment probability is at most
$1+\Delta-\bar X^g$.
Let $\bar X^b$ denote the mean gain among the remaining
$1-\pi$ physicians. To satisfy $E_D[X^D]=r$,
we must have
\[
\bar X^b
=
\frac{r-\pi\bar X^g}{1-\pi}.
\]
Because $p_d\leq 1-\Delta+X^d$ for every physician,
the average treatment probability among the remaining
physicians is at most $1-\Delta+\bar X^b$.

Thus, for a given $\pi$,
\[
\begin{aligned}
p
&\leq
\pi(1+\Delta-\bar X^g)
+
(1-\pi)
\left(
1-\Delta+\frac{r-\pi\bar X^g}{1-\pi}
\right)\\
&=
1-\Delta+r-2\pi(\bar X^g-\Delta).
\end{aligned}
\]
This upper bound is maximized by setting
$\bar X^g=\Delta+\delta^*$, giving
\[
p\leq 1-\Delta+r-2\delta^*\pi.
\]
Rearranging gives
\[
\pi
\leq
\frac{1-\Delta+r-p}{2\delta^*}.
\]
When $\pi=0$, this inequality also holds because
averaging $p_d\leq 1-\Delta+X^d$ gives
$p\leq 1-\Delta+r$.
This establishes the second inequality in \eqref{eq:bound}.

To show sharpness, suppose
$0<\delta^*\leq h_{\max}=\min\{V_C,1-V_T\}$ and fix
\[
\pi=
\min\left\{
\frac{r}{\Delta+\delta^*},
\frac{1-\Delta+r-p}{2\delta^*}
\right\}.
\]
Assign mass $\pi$ to physicians with gain
$X^g=\Delta+\delta^*$.
Assign the remaining mass $1-\pi$ to physicians
all having gain
\[
\bar X^b
=
\frac{r-\pi(\Delta+\delta^*)}{1-\pi}.
\]
Because $\pi\leq r/(\Delta+\delta^*)$, we have
$\bar X^b\geq 0$, and since $r<\Delta$,
we also have $\bar X^b<\Delta$.

To realize these gains, take binary potential outcomes with
\[
\begin{aligned}
P\{Y(1)=0,Y(0)=1\}&=\delta^*,\\
P\{Y(1)=1,Y(0)=0\}&=\Delta+\delta^*,\\
P\{Y(1)=1,Y(0)=1\}&=V_C-\delta^*,\\
P\{Y(1)=0,Y(0)=0\}&=1-V_T-\delta^*.
\end{aligned}
\]
Assign patients independently
of physicians, ensuring
Assumption \ref{assumption:case_mix_exchangeability}.

Good physicians treat all patients in the benefit
stratum and no patients in the harm stratum.
Varying treatment among unaffected patients gives
every treatment probability in
\[
p_g\in[\Delta+\delta^*,\,1-\delta^*],
\]
while preserving gain $\Delta+\delta^*$.

For a bad physician, treating only mass $\bar X^b$
from the benefit stratum gives gain $\bar X^b$
and treatment probability $\bar X^b$.
Alternatively, treating all harmed and unaffected
patients, together with mass $\bar X^b+\delta^*$
from the benefit stratum, gives the same gain
and treatment probability $1-\Delta+\bar X^b$.
Both rules are feasible because $\bar X^b<\Delta$.
Interpolating their stratum-specific treatment
probabilities yields every value in
\[
p_b\in[\bar X^b,\,1-\Delta+\bar X^b],
\]
while preserving gain $\bar X^b$.

For this construction, the minimum attainable average
treatment probability is
\[
\pi(\Delta+\delta^*)+(1-\pi)\bar X^b=r,
\]
and the maximum attainable average treatment probability is
\[
\pi(1-\delta^*)
+
(1-\pi)(1-\Delta+\bar X^b)
=
1-\Delta+r-2\delta^*\pi.
\]
By the definition of $\pi$,
\[
p\leq 1-\Delta+r-2\delta^*\pi.
\]
Moreover, any feasible observed data law satisfies
$p\geq r$, since
\[
X^d
=
E\left[A^d\{Y(1)-Y(0)\}\right]
\leq p_d.
\]
Therefore,
\[
p\in
\left[
r,\,
1-\Delta+r-2\delta^*\pi
\right].
\]
It follows that the treatment probabilities of the
good and bad physicians can be chosen within their
attainable intervals so that $E_D[p_D]=p$,
while preserving $E_D[X^D]=r$ and
\[
P_D(X^D\geq\Delta+\delta^*)=\pi.
\]
Thus the bound in \eqref{eq:bound} is attained.

For $\delta^*=0$, use the same
construction with $\delta^*=0$, take $\pi=r/\Delta$,
and assign gains $\Delta$ and $0$ to the good and bad
physicians, respectively.
The attainable average treatment probabilities then
range from $r$ to $1-\Delta+r$, which contains every
feasible $p$. The same argument therefore proves
sharpness of \eqref{eq:bound_delta0}.

Finally, for every physician,
\[
\begin{aligned}
\delta_d
&=
E\left[(1-A^d)\{Y(0)-Y(1)\}\right]\\
&\leq
E\left[\{Y(0)-Y(1)\}_+\right]\\
&\leq
\min\{V_C,\,1-V_T\}
=
h_{\max}.
\end{aligned}
\]
Hence $P_D(\delta_D\geq\delta^*)=0$ whenever
$\delta^*>h_{\max}$.
For binary $Y$,
\[
E\left[\{Y(0)-Y(1)\}_+\right]
=
\pi_H\leq\phi_U,
\]
which also establishes the zero-probability claim
when $\delta^*>\phi_U$.
\end{document}